\newcommand{\e}{\mathbb{E}}
\newcommand{\EE}{\mathbb{E}}
\newcommand{\PP}{\mathbb{P}}
\newcommand{\Var}{\mathbb{V}}
\newcommand{\hU}{\widehat{U}}
\newcommand{\hS}{\widehat{S}}
\newcommand{\tr}{\mathrm{tr}}
\renewcommand{\Pr}{\mathbb{P}}
\begin{document}\sloppy

\newsiamthm{claim}{Claim}
\newsiamthm{result}{Theorem}
\newsiamremark{remark}{Remark}
\newsiamremark{assumption}{Assumption}
\newsiamremark{model}{Model}

\title{{\bf Numerical Tolerance for Spectral Decompositions of Random Matrices}}

\author{Avanti~Athreya, Michael~Kane, Bryan~Lewis, Zachary~Lubberts, Vince~Lyzinski, Youngser~Park, Carey~E.~Priebe, \and Minh~Tang}

\date{}
\maketitle
\begin{abstract}
 We precisely quantify the impact of statistical error in the quality of a numerical approximation to a random matrix eigendecomposition, and under mild conditions, we use this to introduce an optimal numerical tolerance for residual error in spectral decompositions of random matrices. We demonstrate that terminating an eigendecomposition algorithm when the numerical error and statistical error are of the same order results in computational savings with no loss of accuracy. We also repair a flaw in a ubiquitous termination condition, one in wide employ in several computational linear algebra implementations.  We illustrate the practical consequences of our stopping criterion with an analysis of simulated and real networks. Our theoretical results and real-data examples establish that the tradeoff between statistical and numerical error is of significant import for data science.
\end{abstract}
\section{Introduction}
\label{S:Intro}
The rapid and accurate computation of the eigenvalues and eigenvectors of a matrix is of universal importance in mathematics, statistics, and engineering. In practice, of course, numerical methods to compute such spectral decompositions necessarily involve the imposition of a stopping criterion at which a given linear algebraic algorithm 
 terminates, typically when the residual numerical error is less than some user-specified tolerance. When matrices have random entries, however, there is a second important source of error: the inherent {\em statistical error} (between, say, matrix entries and their mean) in addition to the {\em numerical error} from algorithmic approximations. Both of these sources of error contribute to the discrepancy between computed and theoretical eigendecompositions, and their interplay can be important for determining the optimal termination of an algorithm. Stopping an algorithm when the numerical error is too large can yield unacceptably inaccurate output, but stopping when the numerical error is very small---and is effectively dwarfed by the statistical error---can be computationally expensive without resulting in a meaningful improvement in accuracy.  

We focus on determining an optimal error tolerance for the numerical computation of an eigendecomposition of a random symmetric matrix $A$. Such matrices are ubiquitous in data science, from errorful observations of data to adjacency matrices of random networks. In particular, spectral decompositions of the adjacency and Laplacian matrices of random networks have broad applications, from the distributional convergence of random walks on graphs (\cite{chung_grigoryan_yau_upper}) to the solution of a relaxation of the min-cut problem (\citet{fiedler1973algebraic}).  Further, for random dot product graphs (RDPGs), a particular model which we define and describe in more detail in Section~\ref{S:Simulations}, the spectral decomposition of $A$ can serve as a statistical estimate for underlying graph parameters (see \cite{STFP-2011}).
As described above, some nonzero sampling error is inherent in such procedures; that is, there is random discrepancy between a statistical estimate for a parameter and the true value of the parameter itself. If the order of magnitude of this sampling error is known---for instance, if one can obtain a lower bound for the sampling error incurred when using a spectral decomposition of $A$ to estimate the spectral decomposition of its mean---then the accuracy in the numerical algorithm for the spectral decomposition of $A$ should be weighed against this inherent variability.  There may be little gain in a very careful determination of the eigenvectors of $A$ if these eigenvectors are, with high probability, close to some fixed, nonzero distance from the true model parameters.

  To be precise, suppose $A$ is a symmetric square matrix. Let $S_A$ denote the diagonal matrix of the $d$ largest-magnitude eigenvalues of $A$, and $U_A$ the matrix of corresponding eigenvectors.  Numerical methods for both the singular value decomposition and eigendecomposition---and, in turn, for $S_A$ and $U_A$---abound. Iterative methods, such as the power method, Lanczos factorization, and Rayleigh quotient iteration, to name but a few (see \citet{higham,stewart} for a comprehensive overview) begin with an initialization $\hS_0$ for the eigenvalues or singular values of $A$, and $\hU_0$ for the eigenvector(s), and compute successive updates $\hS_k$ and $\hU_k$ for the eigenvalues and eigenvectors, respectively, until a specified error, often a function of the difference between $A\hU_k$ and $\hU_k\hS_k$, is sufficiently small. For example, many algorithms are set to terminate when the relative error is suitably small: that is, when
\begin{equation}
\label{eq:tol}
\frac{\|A\hU_k- \hU_k\hS_k\|_2}{\|A\|_2}\leq \epsilon.
\end{equation}
 where $\| \cdot \|_2$ denotes the spectral norm.
 Under suitable rank and eigengap assumptions on $A$, convergence of the iterates is guaranteed for generic initializations. When the spectral norm of $A$ is known, the relative error on the left hand side of Eq.\eqref{eq:tol} is directly computable; alternatively, when $\|A\|_2$ must be approximated numerically, it can be replaced by the absolute value of an approximation of the largest-magnitude eigenvalue of $A$. For further details on error analysis and stopping criteria for numerical methods, see \citet{kahan,stewart,arioli_92,arioli_89,rigal_67,hestenes_52}.
 
 

Unfortunately, the stopping criterion in Equation~\eqref{eq:tol} has a known flaw: when the columns of the initialization $\hat{U}_0$ belong to an invariant subspace of $A$ that is not the span of the top $d$ eigenvectors---or, because of numerical error, they appear to belong to such a subspace---then even when this stopping criterion is satisfied, the wrong eigenspace may be approximated. Typically, it is assumed that $U_A^{T} \hat{U}_0$ is nonsingular, but such nonsingularity is frequently unverifiable, and it is unclear how numerical imprecision affects it. We offer the alternative stopping criterion \begin{equation}
\label{eq:tol_alt}
\frac{\|A\hat{U}_k-\hat{U}_k\hat{S}_k\|_2}{\|A\|_2}+\|\hat{S}_k^{-1}\|_2\leq\epsilon,\end{equation} and show that under our model assumptions and choice of $\epsilon$, when this criterion is satisfied, $\|\hat{S}_k^{-1}\|_2$ is of strictly lower order than the relative error, so the stopping criteria in Eq.~\eqref{eq:tol} and Eq.~\eqref{eq:tol_alt} are very similar. However, when the wrong eigenspace is being approximated, our added term will be too large, so it will prevent the algorithm from terminating. Consequently, when this new stopping criterion holds, the algorithmic output approximates the correct eigenspace.

Our main result is the following. Let $A$ be a symmetric, random matrix with independent, uniformly bounded entries. We suppose that the mean matrix $P$ defined by $P_{ij}=\EE[A_{ij}]$ is rank-$d$ and positive semidefinite. Let $\mathcal{S}=[\sigma_{ij}^{2}]_{ij=1}^{n}$ denote the matrix of variances for $A$, so that \begin{equation}\label{eq:vardef}\sigma_{ij}^{2}=\Var(A_{ij})=\EE (A_{ij}-P_{ij})^{2}\text{ for all }1\leq i\leq j\leq n.\end{equation} Let $\epsilon$ be the error tolerance in the numerical algorithm used to compute the rank-$d$ eigendecomposition of $A$. Then under mild assumptions on the eigenvalues of $P$ and the order of the rows of $\mathcal{S}$, we find that $\epsilon$ need not be much smaller than $\|A\|_2^{-1/2}$ before further reduction will not improve the accuracy of the numerical eigendecomposition: see Section~\ref{S:Opt_Tol}. We also show in Section~\ref{S:Simulations} via numerical simulations on real and simulated data that this choice of $\epsilon$ can lead to considerable computational savings compared to relying on a default algorithmic tolerance, in a problem of estimating the latent positions for a random graph model, and that it does not negatively affect subsequent inference.

The organization of the paper is as follows: In the following Section~\ref{S:setting}, we set our notation, terminology, and random matrix model. In Section~\ref{S:StatisticalErrorBounds}, we state our results on the order of the statistical error in the approximation of the eigendecomposition of $P$ when using the (exact) eigendecomposition of the observed matrix $A$. In Section~\ref{S:NumericalErrorBounds}, we state our results on the order of the numerical error in the approximation of the eigenvectors of $A$ when stopping an algorithm according to condition~\eqref{eq:tol_alt} for a given error tolerance $\epsilon$. In Section~\ref{S:Opt_Tol}, we synthesize these results to obtain an optimal choice of $\epsilon$ in the stopping criterion. Our simulation results are contained in Section~\ref{S:Simulations}, and all proofs are located in the Appendix~\ref{AS:Proofs}.

\section{Setting and notation}
\label{S:setting}

Let $\Omega$ represent our sample space and $\mathbb{P}$ our probability measure. The expectation of a random variable will be denoted by $\mathbb{E}$. If $v$ is a vector, $|v|$ denotes its Euclidean length. For any $n \times n$ real-valued matrix $M$, $M^{\top}$ denotes the transpose, $\tr(M)$ the trace,  $\|M\|_F$ the matrix Frobenius norm, and $\|M\|_2$ the spectral norm. For any symmetric matrix $M$ with some number of non-zero eigenvalues, let $\lambda_1(M), \cdots, \lambda_d(M)$ denote, in decreasing order, the $d$ eigenvalues of $M$ with largest magnitude (so $|\lambda_1(M)|\geq |\lambda_2(M)|$, etc).

We now describe the random matrix model we will consider throughout this work.

\begin{model} (LPSM)\label{ass:model}
We say a random symmetric matrix $A$ of order $n$ with independent entries $A_{i,j}\in[\alpha_{ij},\alpha_{ij}+\beta]$ follows a {\em low-rank, positive semidefinite mean (LPSM) model} (or that $A$ is an LPSM) if the mean matrix $P=\EE[A]$ is positive semidefinite and of rank $d$, where $d$ is fixed.
\end{model}

Recalling that $\sigma^2_{ij}=\Var(A_{ij})$ as in Eq. \eqref{eq:vardef}, we define the following quantities: \begin{equation}\label{eq:sigmandef}\sigma^{2}(n):=\max_{i}\sum_{j=1}^{n}\sigma_{ij}^{2},\text{ and } \mu^{2}(n):=\min_{i}\sum_{j=1}^{n}\sigma_{ij}^{2}.\end{equation} Note that we clearly have $\mu^{2}(n)\leq \sigma^{2}(n)$ from these definitions.

For our main results, we will make the following assumptions on the signal in the matrix $P$ and the order of the variance $\sigma^2(n)$ and $\mu^2(n)$. 
\begin{assumption} (``Sufficient signal, sufficient noise")
\label{ass:typical}
Suppose that \begin{align*}c_1 n\geq \lambda_1(P)&\geq \lambda_d(P)\geq c_2 n,\text{ and}\\c_1' n\geq \sigma^{2}(n)&\geq \mu^{2}(n)\geq c_2' n,\end{align*} for some $c_1>c_2>0$ and $c_1'>c_2'>0$.
\end{assumption}

\begin{remark}\label{rem:suff_signal_suff_noise_why} In the case that the matrix $A$ is obtained from real data, which is truncated at a given floating point precision, we will have at least $\sigma^{2}(n)\geq \mu^{2}(n)\geq cn$ for some constant $c>0$ depending on this precision. When the variance is too large, the signal overwhelms the noise, corrupting the computed eigendecomposition. On the other hand, when the variance is very small, early termination may not be optimal. Indeed, in the very low-variance case, since $A$ is likely to be extremely close to $P$, the major source of error is numerical rather than statistical, so incremental reduction of this error can still improve the final quality of the approximation.
\end{remark}

We use the following notation for the spectral decomposition of $P$:
\begin{equation}\label{eq:spec_decomp_P}
P=[U_P|\tilde U_P][S_P\oplus {\bf 0}][U_P|\tilde U_P]^\top
\end{equation}
where $S_P$ is the diagonal matrix of the $d$ non-zero eigenvalues of $P$ and $U_P$ is the matrix of associated eigenvectors. Since $P$ is positive semidefinite, it can be expressed as $XX^\top$, where $X=U_PS_P^{1/2}$. 
Denote the spectral decomposition of $A$ analogously:
\begin{equation}\label{eq:spec_decomp_A}
A=[U_A|\tilde U_A][S_A\oplus \tilde S_A][U_A|\tilde U_A]^\top
\end{equation}
where $S_A$ is the diagonal matrix of the $d$ largest eigenvalues of $A$ (in absolute value) and $U_A$ is the matrix of the associated eigenvectors. We define $\hat{X}=U_A S_A^{1/2}$. In Section~\ref{S:StatisticalErrorBounds}, we state upper and lower bounds on $\|U_A-U_P W\|_F$ and $\|\hat{X}-XW\|_F$, where $W$ is an orthogonal matrix.  



Our approach to determining an appropriate error tolerance for random matrix eigendecompositions is interwoven with the matrix size; in other words, the error tolerance $\epsilon$ is allowed to depend on $n$, and we consider the probabilistic implications of this for large $n$.  As such, we employ a strong version of convergence in probability, and we rely on notions of asymptotic order, both of which we describe below.
\begin{definition}\label{overwhelm_prob} If $D_n$ is a sequence of events indexed by $n$, we say that $D_n$ occurs {\em with overwhelming probability} if for any $c>0$, there exists $n_0(c)$ such that for all $n>n_0(c)$, $\PP(D_n) > 1-\frac{1}{n^{c}}$.
\end{definition}
\begin{definition} If $w(n),\alpha(n)$ are two quantities depending on $n$, we will say that {\em $w$ is of order $\alpha(n)$} and use the notation $w(n) \sim \Theta(\alpha(n))$ to denote that there exist positive constants $c, C$ such that for $n$ sufficiently large,
$c\alpha(n) \leq w(n) \leq C \alpha(n).$
We write $w(n) \sim O(\alpha(n))$ if there exists a constant $C$ such that for $n$ sufficiently large, $w(n) \leq C\alpha(n)$.  We write $w(n) \sim o(\alpha(n))$ if $w(n)/\alpha(n) \rightarrow 0$ as $n \rightarrow \infty$, and $w(n)\sim o(1)$ if $w(n) \rightarrow 0$ as $n \rightarrow \infty$. When $\alpha(n)\sim O(w(n)),$ we write $w(n)\sim\Omega(\alpha(n)).$
\end{definition}

\section{Statistical error bounds for eigendecompositions of LPSM matrices}
\label{S:StatisticalErrorBounds}

We now state Theorem~\ref{thm:uaminusup}, our principal result on the order of the statistical error in approximating the matrix of top $d$ eigenvectors $U_P$ of $P$ with the matrix of top $d$ eigenvectors $U_A$ of $A$. Note the asymmetry in the statements of the lower and upper bounds in Theorem~\ref{thm:uaminusup}. For the upper bound, we show that there is {\em some} orthogonal matrix $W$ such that $\|U_A-U_P W\|_F$ can be made small, where $W$ accounts for the nonuniqueness of the eigenvectors: for example, sign changes for a univariate eigenspace, or linear combinations of eigenvectors in a higher-dimensional eigenspace. For the lower bound, we show that for \emph{any} $W$, $\|U_A-U_P W\|_F$ is at least as large as some quantity depending on the variance in the approximation of $P$ by $A$; this discrepancy between the eigenvectors of $P$ and $A$ is a consequence of inherent random error and cannot be rectified by linear transformations. The proofs for all of the results in this section may be found in Appendix~\ref{s:thmuaminusupproof}.

\begin{theorem}
\label{thm:uaminusup}
Let $A$ be an LPSM matrix, satisfying Assumption~\ref{ass:typical}. With overwhelming probability, there exists an orthogonal matrix $W$ such that $$\|U_A-U_P W\|_F\sim O(1/\sqrt{n}),$$ and for any orthogonal matrix $W$, $$\|U_A-U_P W\|_F\sim\Omega(1/\sqrt{n}).$$ In particular, there exists an orthogonal matrix $W$ such that $\|U_A-U_P W\|_F\sim \Theta(1/\sqrt{n}).$

\end{theorem}

\begin{corollary}
\label{c:uaminusupspectral}
In the setting of the previous theorem, the same bounds hold for the spectral norm $\|U_A-U_P W\|_2$.
\end{corollary}

Recall that the scaled eigenvector matrices are defined by $X=U_P S_P^{1/2},$ and $\hat{X}=U_A S_A^{1/2}.$ We also show bounds on the estimation error between the random $\hat{X}$ and the true $X,$ which we will see in Section~\ref{S:Simulations} is useful in network inference.

\begin{theorem}
\label{thm:xhatminusx}
Suppose $A$ is an LPSM matrix, satisfying Assumption~\ref{ass:typical}. Then there is a sequence $\gamma(n)\sim O(\log(n)/\sqrt{n})$ such that with overwhelming probability, there exists an orthogonal matrix $W$ satisfying $$\|\hat{X}-XW\|_{F}\leq C(P)+\gamma(n),$$ where \begin{equation}\label{eq:CP}
C^{2}(P)=\mathrm{tr}(S_P^{-1/2}U_P^\top \EE(A-P)^{2} U_P S_P^{-1/2})
\end{equation}
and $\EE(A-P)^{2}=\mathrm{diag}\left(\left\{\sum_{j}\sigma_{ij}^{2}\right\}_{i}\right).$ With overwhelming probability, for all orthogonal matrices $W$, we also have the associated lower bound $$\|\hat{X}-XW\|_F\geq C(P)-\gamma(n).$$ Moreover, $C(P)$ is of constant order and is bounded away from zero.
\end{theorem}

\begin{corollary}
\label{c:xhatminusxspectral}
In the setting of the previous theorem, $$\frac{1}{\sqrt{d}}C(P)-\gamma(n)\leq \|\hat{X}-XW\|_2\leq C(P)+\gamma(n).$$
\end{corollary}

\section{Numerical error bounds for eigendecompositions of LPSM matrices}\label{S:NumericalErrorBounds}

Given a numerical eigendecomposition algorithm, let $\widehat{S}_k$ be the $k$th iterate of the diagonal matrix of the largest $d$ approximate eigenvalues of $A$ and let $\hU_k$ be the $k$th iterate of the $n \times d$ matrix of the corresponding orthonormal approximate eigenvectors.  As described in Eq. \eqref{eq:tol}, we consider algorithms that terminate 
when the following stopping criterion is achieved:
\begin{equation}\label{eq: irlba_terminal_condition}
\frac{\|A\hU_k - \hU_k \widehat{S}_k\|_2}{\|A\|_2} \leq \epsilon
\end{equation}
where $\epsilon$ is the user-specified error tolerance.
\begin{remark}
As mentioned in Section~\ref{S:Intro}, the stopping criterion above is computationally attractive and appealing to intuition, but has a significant drawback: if $\hU_{k}$ is the matrix whose columns are the eigenvectors corresponding to $\lambda_{d+1},\ldots,\lambda_{2d}$, then \eqref{eq: irlba_terminal_condition} will hold, but the desired approximation of the top $d$ eigenvectors fails. Thus, in our analysis, we consider the alternative stopping criterion
\begin{equation}\label{eq: irlba_terminal_condition_alt}
\frac{\|A\hU_k-\hU_k\widehat{S}_k\|_2}{\|A\|_2}+\|\widehat{S}_k^{-1}\|_2\leq\epsilon
\end{equation}
This criterion addresses complications that can occur because of problematic initializations: specifically, an initialization $\hat{U}_0$ that belongs to an invariant subspace of $A$ other than the one spanned by the top $d$ eigenvalues, or, because of numerical error, appears to belong to such a subspace. As an example, we generate a matrix $A$ with seven eigenvalues in the interval $(10,11),$ and 93 eigenvalues in the interval $(0,1)$. If we call \texttt{irlba}, an implementation of Implicitly Restarted Lanczos Bidiagonalization \citep{baglama}, with an initial vector $v$ whose first seven entries are zero,
the eigenspace associated to the top seven eigenvalues is entirely missed. This still occurs when the entries in the first components of the initial vector are very small but nonzero ($\approx 10^{-100}$). On the other hand, if we run \texttt{irlba} on $QAQ^\top,$ where $Q$ is a generic unitary matrix, with starting vector $Qv$, 
the correct eigenspace is approximated. Such numerical complications are likely to depend on details of the algorithmic implementation, and while we suspect that, in practice, problematic initializations  should rarely occur for commonly-used algorithms, it is difficult to make general statements about their likelihood.

Importantly, we show in Thm. \ref{thm:numapprox} that for our recommended choice of $\epsilon$, when the criterion in \eqref{eq: irlba_terminal_condition_alt} is satisfied, the term $\|\widehat{S}_{k}^{-1}\|_{2}$ is of strictly lower order than the relative error. Thus, off of the set of problematic initializations, the termination conditions in \eqref{eq: irlba_terminal_condition} and \eqref{eq: irlba_terminal_condition_alt} are functionally equivalent. On the other hand, problematic initializations will go undetected by stopping criterion~\eqref{eq: irlba_terminal_condition}, but on such initializations, criterion~\eqref{eq: irlba_terminal_condition_alt} will prevent the algorithm from terminating, generating a flag that the algorithm terminated for reasons other than the achievement of the stopping criterion. This makes it clear to the user that something has gone awry.



 Equation~\eqref{eq: irlba_terminal_condition_alt} also has a significant theoretical advantage: We show that under this condition and our LPSM assumptions, with high probability (depending only on the random nature of $A$), the nearest eigenvalues of $A$ to those of $\hat{S}_k$ are precisely $\lambda_1(A),\ldots,\lambda_d(A),$ and the eigenvectors $\hat{U}_k$ approximate the top $d$ eigenvectors in $U_A$, as desired. Since we make no assumptions about the algorithm used to compute the approximate eigenvectors and eigenvalues besides the stopping criterion, in order to guarantee this attachment between approximation and truth, we need a stronger condition than Equation~\eqref{eq: irlba_terminal_condition}.
 \end{remark}
 
To choose $\epsilon$ ``optimally"---that is, small enough for accuracy but not so small as to squander computational resources---we need to understand how the stopping criterion impacts the separation between the algorithmically-computed $k$th iterate matrix $\hU_k$ and $U_A$, the true matrix of the $d$ largest eigenvectors of $A$. Similarly, we must understand how the algorithm impacts the separation between the true and approximate eigenvalues. We describe both in the following proposition, the proof of which may be found in Appendix~\ref{s:propnumapproxproof}.

\begin{theorem}
\label{thm:numapprox}

Suppose that $A$ is a symmetric matrix whose top $d$ eigenvalues $\lambda_1(A),\ldots,\lambda_d(A)$ satisfy $\lambda_1(A),\ldots,\lambda_d(A) \sim \Theta(n)$ and whose remaining eigenvalues $\lambda_{d+1}(A), \ldots, \lambda_n(A)$ satisfy $|\lambda_{d+1}(A)|,\ldots,|\lambda_{n}(A)| \sim O(\sqrt{n})$. Let $\hat{U}_k,\hat{S}_k$ be approximate matrices of eigenvectors and eigenvalues for $A$, with the diagonal entries of $S_A,\hat{S}_k$ nonincreasingly ordered. Suppose $\epsilon \sim o(1/\sqrt{n})$. Then for $n$ sufficiently large, $$\frac{\|A\hU_k-\hU_k\widehat{S}_k\|_2}{\|A\|_2}+\|\widehat{S}_k^{-1}\|_2\leq \epsilon$$
guarantees that
$$\frac{\|\hat{S}_k-S_A\|_2}{\|A\|_2}\leq \epsilon,$$ and there is an orthogonal matrix $W$ and constant $C>0$ such that $$\|\hat{U}_k-U_A W\|_F< C\epsilon.$$ Moreover, $\|\hat{S}_k^{-1}\|_2\sim O(1/n)$.
\end{theorem}

\section{Optimal numerical tolerance}\label{S:Opt_Tol}

%
%

We are now ready to give our results combining the bounds on the statistical and numerical error, and suggesting an optimal choice of $\epsilon$ in the stopping criterion so that the numerical error is of just smaller order than the statistical error. Indeed, as we will soon see, with high probability, choosing $\epsilon$ any smaller does not change the order of the error in the approximation $\hat{U}_k\approx U_P W$, so the further effort and computation required to achieve this reduction in the numerical error is essentially wasted. The proof of all of the results in this section may be found in Appendix~\ref{s:thmnumtolproof}.

\begin{theorem}\label{thm:numtol} Let $A$ be an LPSM matrix, and let $\hat{U}_k,\hat{S}_k$ be the approximated eigenvectors and eigenvalues of $A$ satisfying Equation~\ref{eq: irlba_terminal_condition_alt}, where $\epsilon$ denotes the error tolerance. Let $C(P)$ be defined as in Equation~\ref{eq:CP}. Then there is a constant $C>0$ and a sequence $\beta(n)\sim O(\log(n)/n)$ such that with overwhelming probability, for $n$ sufficiently large, there exists an orthogonal matrix W such that \begin{equation}\label{eq:num_tol_critical_bound}\frac{C(P)}{\sqrt{\|A\|_2}}-\beta(n)-C\epsilon\leq \|\hat{U}_k-U_P W\|_F\leq \frac{C(P)}{\sqrt{\lambda_d(A)}}+\beta(n)+C\epsilon,\end{equation} where the lower bound holds for all $W$. If $\epsilon\sim o(\|A\|_2^{-1/2}),$ then the lower bound is of order $\|A\|_2^{-1/2}.$ If the rate at which $\epsilon\rightarrow0$ is increased, then the order of the lower bound in Equation~\ref{eq:num_tol_critical_bound} is not improved.
\end{theorem}

\begin{corollary}
\label{c:numtolspectral}
In the setting of the previous theorem, the following inequalities also hold:
$$\frac{C(P)}{\sqrt{d\|A\|_2}}-\beta(n)-C\epsilon\leq \|\hat{U}_k-U_P W\|_2\leq \frac{C(P)}{\sqrt{\lambda_d(A)}}+\beta(n)+C\epsilon.$$
\end{corollary}

\begin{theorem}\label{thm:numtol_xs} Let $A$ be a LPSM matrix, and let $\hat{U}_k \hat{S}_k^{1/2}$ be the numerical approximation of $\hat{X}=U_A S_A^{1/2}$, where $\hat{U}_k,\hat{S}_k$ satisfy Equation~\ref{eq: irlba_terminal_condition_alt} for a given error tolerance $\epsilon$. Let $C(P)$ be defined as in Equation~\ref{eq:CP}. Then there exists a constant $C>0$ and sequence $\beta(n)\sim O(\log(n)/n)$ such that with overwhelming probability, for $n$ sufficiently large, there exists an orthogonal matrix W such that \begin{equation}\label{eq:num_tol_critical_bound_xs}\frac{C(P)}{2\sqrt{\|A\|_2}}-\beta(n)- C\epsilon\leq \frac{\|\hat{U}_k \hat{S}_k^{1/2}-U_P S_P^{1/2} W\|_F}{\sqrt{\|P\|_2}}\leq \frac{2 C(P)}{\sqrt{\|A\|_2}}+\beta(n)+C \epsilon,\end{equation} where the lower bound holds for all $W$. If $\epsilon\sim o(\|A\|_2^{-1/2}),$ then the lower bound is of order $\|A\|_2^{-1/2}.$ If the rate at which $\epsilon\rightarrow0$ is increased, then the order of the lower bound in Equation~\ref{eq:num_tol_critical_bound} is not improved.
\end{theorem}

\begin{corollary}
\label{c:numtol_xsspectral}
In the setting of the previous theorem, the following inequalities also hold:
$$\frac{C(P)}{2\sqrt{d\|A\|_2}}-\beta(n)-C\epsilon\leq \frac{\|\hat{U}_k\hat{S}_k^{1/2}-U_P S_P^{1/2}W\|_2}{\sqrt{\|P\|_2}}\leq \frac{2C(P)}{\sqrt{\|A\|_2}}+\beta(n)+C\epsilon.$$
\end{corollary}


In particular, Theorems~\ref{thm:numtol} and \ref{thm:numtol_xs} ensure that when $\epsilon<<\|A\|_{2}^{-1/2}$, the statistical error in the approximation of $U_P W$ (respectively, of $U_P S_P^{1/2} W$) by $\hat{U}_k$ (respectively, $\hat{U}_k \hat{S}_k^{1/2}$) dominates the numerical error, so with high probability, the additional computational resources used to achieve this small numerical error have been squandered.

\begin{remark} In practice, for the finite sample case, the constants in Theorem~\ref{thm:numtol} cannot be determined prior to computation. Upper bounds for certain constants can be given, but they are typically far from sharp; see \cite{tang14:_semipar}. Nevertheless, we can suggest a large-sample ``rule-of-thumb." Namely, under our model assumptions,
choosing $\epsilon$ to be of just slightly smaller order than $1/\sqrt{\|A\|_2}$ allows us to account for the unknown constants. Thus, a potential  heuristic is to let  $\epsilon$ be approximately $1/[\log(\log(n)) \sqrt{\|A\|_2}]$; that is, we simply want $\epsilon$ to be of just smaller order than $1/\sqrt{\|A\|_2}$, and the $\log(\log(n))$ factor is but one of many that would allow us to achieve this. From the point of view of implementation, the reader may wonder how to impose a tolerance of order $1/\sqrt{\|A\|_2}$ without actually calculating the spectral norm of $A$ itself. To address this, note that the maximum absolute row sum of $A$, $\delta(A)$, is both inexpensive to compute and serves as an upper bound for the spectral norm, so $1/\sqrt{\delta(A)}$ can be employed as a conservative tolerance for the computation of the largest-magnitude eigenvalue of $A$. Once this top eigenvalue is computed, subsequent computations can proceed with $1/\sqrt{\|A\|_2}$ as the tolerance.   
\end{remark}
\section{Simulations}\label{S:Simulations}

In our simulations, we examine the impact of optimal stopping for inference in statistical networks, particularly spectral decompositions of adjacency matrices for random dot product graphs (RDPGs).  RDPGs are independent-edge random graphs in which each vertex has an associated {\em latent position}, which is a vector in some fixed, finite-dimensional Euclidean space; the probability of a connection between two vertices is given by the inner product of their latent position vectors. Random dot product graphs are an example of the more general latent position random graphs \citep{Hoff2002}, and as we delineate below, the popular stochastic block model \citep{Holland1983} can be interpreted as an RDPG. Formally,
\begin{definition}\label{def:RDPG}[Random Dot Product Graph (RDPG)] 
	Let $\Omega$ be the subset of $\mathbb{R}^{d}$ such that for any two elements $X_1, X_2 \in \Omega$, $X_1^{\top} X_2 \in [0,1]$.
	Let $X=[X_1 \mid \cdots \mid X_n]^{\top}$ be a $n \times d$ matrix whose rows are elements of $\Omega$. 
	Suppose $A$ is
	a random adjacency matrix given by
	\begin{equation*}
	\Pr[A|X]=
	\prod_{i<j}(X_i^{\top}X_j)^{A_{ij}}(1-X_i^{\top}X_j)^{1-A_{ij}}
	\end{equation*}
	We then write $A \sim \mathrm{RDPG}(X)$ and say that $A$ is the adjacency 
	matrix of a {\em random dot product graph} with {\em latent position} $X$
	{\em of rank} at most $d$.
	\end{definition}
	
When $A\sim \mathrm{RDPG}(X),$ it is necessarily an LPSM matrix. As described previously, given $X$, the probability $p_{ij}$ of adjacency between vertex $i$ and $j$ is simply $X_i^{\top}X_j$, the dot product of the associated latent positions $X_i$ and $X_j$. We define the matrix $P=(p_{ij})$ of such probabilities by $
	P=XX^{\top}$.  We will also write $A \sim
\mathrm{Bernoulli}(P)$ to represent that the existence of an
edge between any two vertices $i,j$, where $i>j$, is a Bernoulli
random variable with probability $p_{ij}$; edges are independent. We
emphasize that the graphs we consider are undirected and
have no self-edges. A vital inference task for RDPGs is the estimation of the matrix of latent positions $X$ from a single observation of a suitably-large adjacency matrix $A$. There is a wealth of recent literature on the implications of latent position estimation for vertex classification, multisample network hypothesis testing, and multiscale network inference. See, for example, \cite{STFP-2011, athreya2013limit, lyzinski13:_perfec, hsbm, tang14:_semipar}.



For clarity in the case of simulations, we focus on  the stochastic block model, a graph in which vertices are partitioned into $k$ separate blocks, and the probability of a connection between two vertices is simply a function of their block memberships. Thus a stochastic block depends on a vector $\tau$ of vertex-to-block assignments and a $k \times k$ block probability matrix $B$ whose entries $B_{rs}$ give the probability of connection between any vertex in block $r$ and any vertex in block $s$. A stochastic block model whose block probability matrix is positive semidefinite can be regarded as a random dot product graph whose latent position matrix has $k$ distinct rows, and the spectral decomposition methods we apply to random dot product graphs can be used to infer $B$ and $\tau$ for a stochastic block model as well \citep{STFP-2011}.

In the case of real data,  we address community detection for a subset of the YouTube network, drawn from the Stanford Network Analysis Project (SNAP).\footnote{For the particular YouTube data on which this is based, see the Stanford Network Analysis Project at \texttt{http://snap.stanford.edu/data/index.html}}  
Our code for simulations can be found at
\texttt{www.cis.jhu.edu/}$\sim$\texttt{parky/Tolerance/tolerance.html}.

First, we investigate stopping criteria for eigendecompositions of stochastic block model graphs. In Figure \ref{fig:sbm900_9000_err}, Panel (a), we generate several instantiations of stochastic block model graphs, each with $n=900$ vertices, and equal size blocks and $3\times3$ block probability matrix whose entries are $0.05$ on the diagonal and $0.02$ on the off-diagonal.  

We recall that the {\em Procrustes} error between two $n \times d$ matrices $X$ and $Y$ is given by
$$\min_{O \in \mathcal{O}^{d \times d}}  \|X -Y O\|_F$$
	where $\mathcal{O}^{d \times d}$ denotes the collection of $d \times d$ orthogonal matrices with real entries. That is, the Procrustes distance between the two matrices is zero if there exists an orthogonal transformation of the row vectors that renders one matrix equal to the other. The non-identifiability inherent to a RDPG (namely, the fact that $(XW)(XW)^{\top}=XX^{\top}$) makes such a Procrustes measure the appropriate choice when inferring latent positions associated to random dot product graphs, since they can only be inferred up to an orthogonal transformation.
	
Using the IRLBA algorithm to compute the first three eigenvectors of the associated adjacency matrices, we plot the average Procrustes error (with standard error bars) between the top three estimated eigenvectors and the true eigenvectors of the associated $P$-matrix, i.e. $P=ZBZ^{\top}$, where $Z$ is the $n \times 3$ matrix of block assignments. We observe that when the numerical tolerance is of order $2^{-6}$, this error stabilizes; note that when $n=900$, $1/[\log(\log(n))\sqrt{n}] \approx 2^{-6}$. The default tolerance used in this implementation of IRLBA, however, is $10^{-6}$, which is roughly $2^{-20}$, and stopping at our specified tolerance reduces the number of iterations by a factor of 2 without negatively affecting the results. For much larger stochastic block models, the corresponding computational reduction can be substantial.
\begin{figure}[h]
  \centering
  \subfloat[]{\includegraphics[width=0.4\columnwidth]{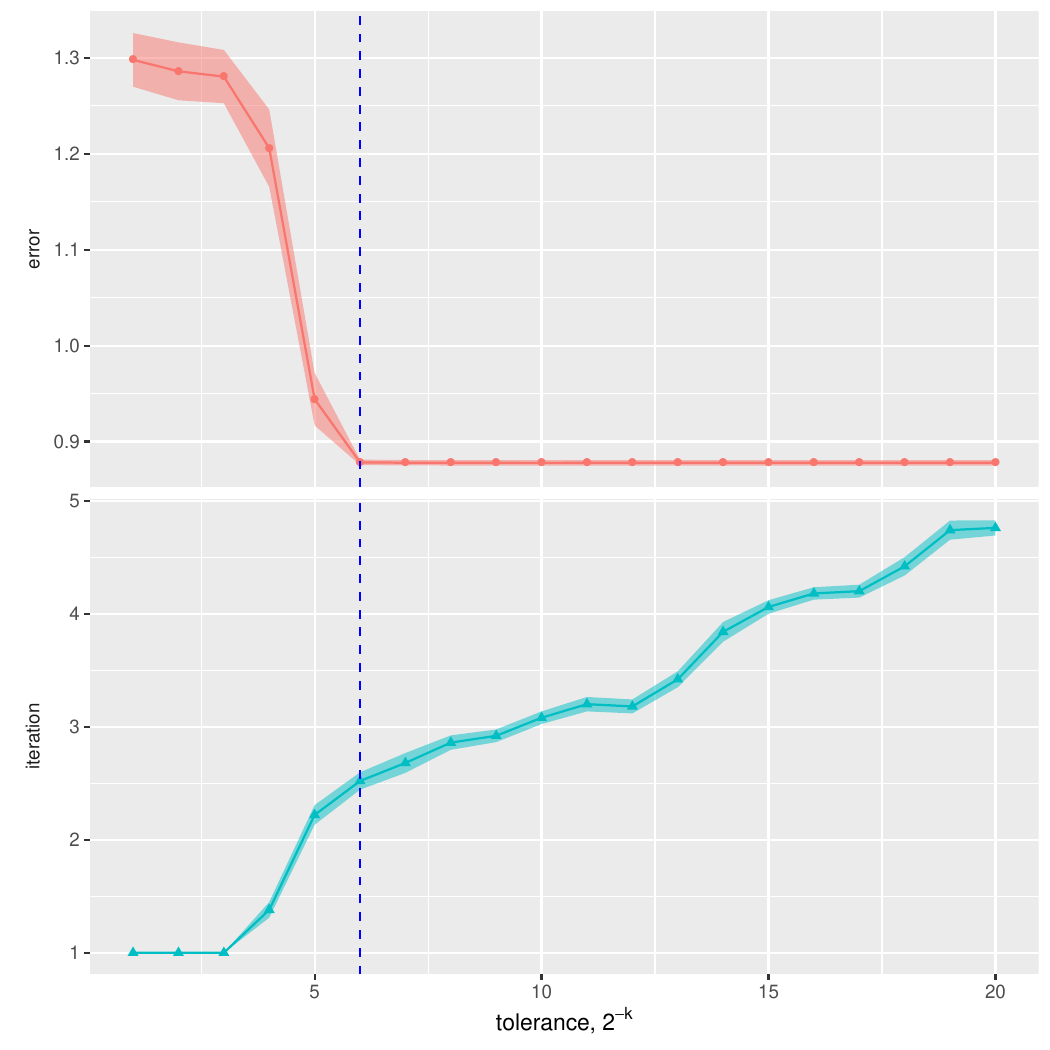}\label{parta}}
~~~
\subfloat[]{\includegraphics[width=0.48\columnwidth]{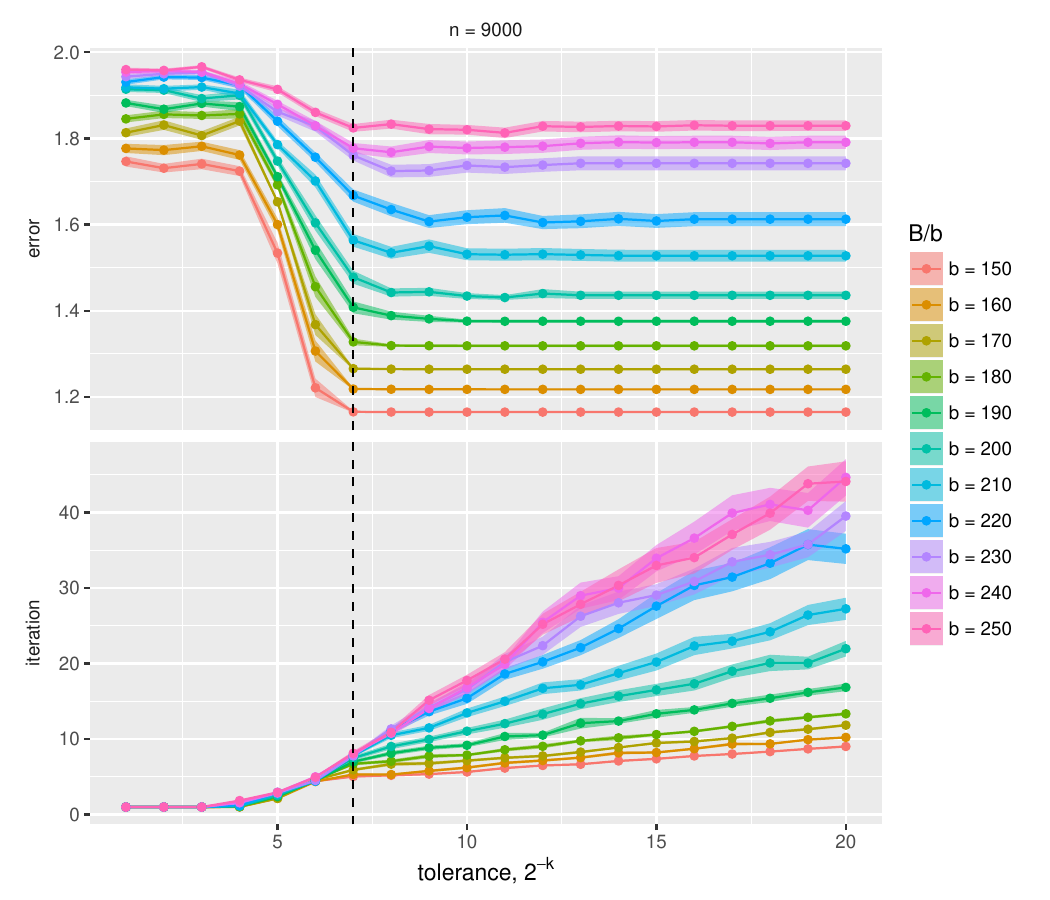}\label{partb}}

  \caption{(Panel a) Top plot represents average Procustes distance between top three approximate eigenvectors of the adjacency matrix and the associated eigenvectors of the probability matrix, for a three-block stochastic block model, plotted against the numerical error tolerance for the eigendecomposition. Bottom plot shows number of iterations. (Panel b) Top plot is average Procrustes distance, with error bars, between approximate eigenvectors of $A$ and the associated eigenvectors of $P$ for SBMs with different $B$ which are scalings of one another by a factor $b$, plotted against the numerical error tolerance for the eigendecomposition. Our heuristic of $1/[\log(\log(n))\sqrt{n}]$ for optimal tolerance performs well across a range of parameters. Bottom plot shows number of iterations.}
  \label{fig:sbm900_9000_err}
\end{figure}

In Figure \ref{fig:sbm900_9000_err} Panel (b), we repeat this procedure for a stochastic block model with $n=9000$ vertices and several different block probability matrices $B$, which are scalings of one another. Observe that our stopping criterion of $1/[\log(\log(n))\sqrt{n}]$ is appropriate across the whole range of these different block probability matrices, saving computation without degrading the results.

Finally, we consider the impact of our stopping criterion on the inference task of community detection in a Youtube network. Youtube networks generally exhibit hierarchical community structure (see \citet{hsbm}) and can be modeled by an SBM/RDPG with a potentially large number of blocks. 
For the task of community detection in a Youtube network, we wish to cluster the rows of the matrix of eigenvectors associated to the adjacency matrix of a Youtube network. We use $K$-means clustering with silhouette width  \citep{rousseeuw_silhouette}, the latter of which is a measure of the rectitude of any particular clustering, to determine the optimal number of clusters. 
The Youtube network we examine has 1,134,890 nodes and 2,987,624 edges, and the spectral decomposition of the adjacency matrix proceeds via IRLBA, with \citet{zhu06:_autom} used to choose the estimated embedding dimension of $\hat{d}=26$. For each $k=1,2, \cdots$ and corresponding tolerance $2^{-k}$, IRLBA generates approximations $\hS_k$ for the the top $\hat{d}=26$ eigenvalues of $A$ and $\hU_k$ for the matrix of associated eigenvectors.  Next, we use $K$-means clustering, with silhouette width to determine the optimal number of clusters, to cluster the rows of the matrix $\hat{U}_k$. 
  We conduct this clustering procedure again, now computing the spectral decomposition with the default tolerance in IRLBA of $10^{-6}$. Ten iterations are run for each of the above clustering methods, and the top plot of Figure \ref{fig:youtube} shows the mean Adjusted Rand Index (ARI), plotted with standard error, for these two clusterings.  We see that at tolerance $2^{-10} \approx 0.00097$, the ARI is nearly 1. Morever, the ARI is very close to 0.95 by tolerance $2^{-7}$. Our heuristic of $1/[\log(\log(n))\sqrt{n}]$ for $n=1134890$ yields a tolerance of $0.000356$.  All of these tolerances---$2^{-7}, 2^{-10}, 3.5\times 10^{-4}$---are of significantly larger magnitude than the default tolerance in IRLBA, which for this implementation is $10^{-6}$ (and in some other implementations can be $10^{-5}$.) This suggests that stopping earlier can save computational time without negatively impacting subsequent inference. In the second plot from the top in Figure \ref{fig:youtube}, we see how the behavior of the ARI for successive pairs $(2^{-k}, 2^{-(k+1)})$ of tolerances.  The bottom two plots in Figure \ref{fig:youtube} indicate the number of iterations and the elapsed time (in seconds) required at each tolerance.
\begin{figure}[!ht]
	\begin{center}
		\includegraphics[width=0.5\columnwidth]{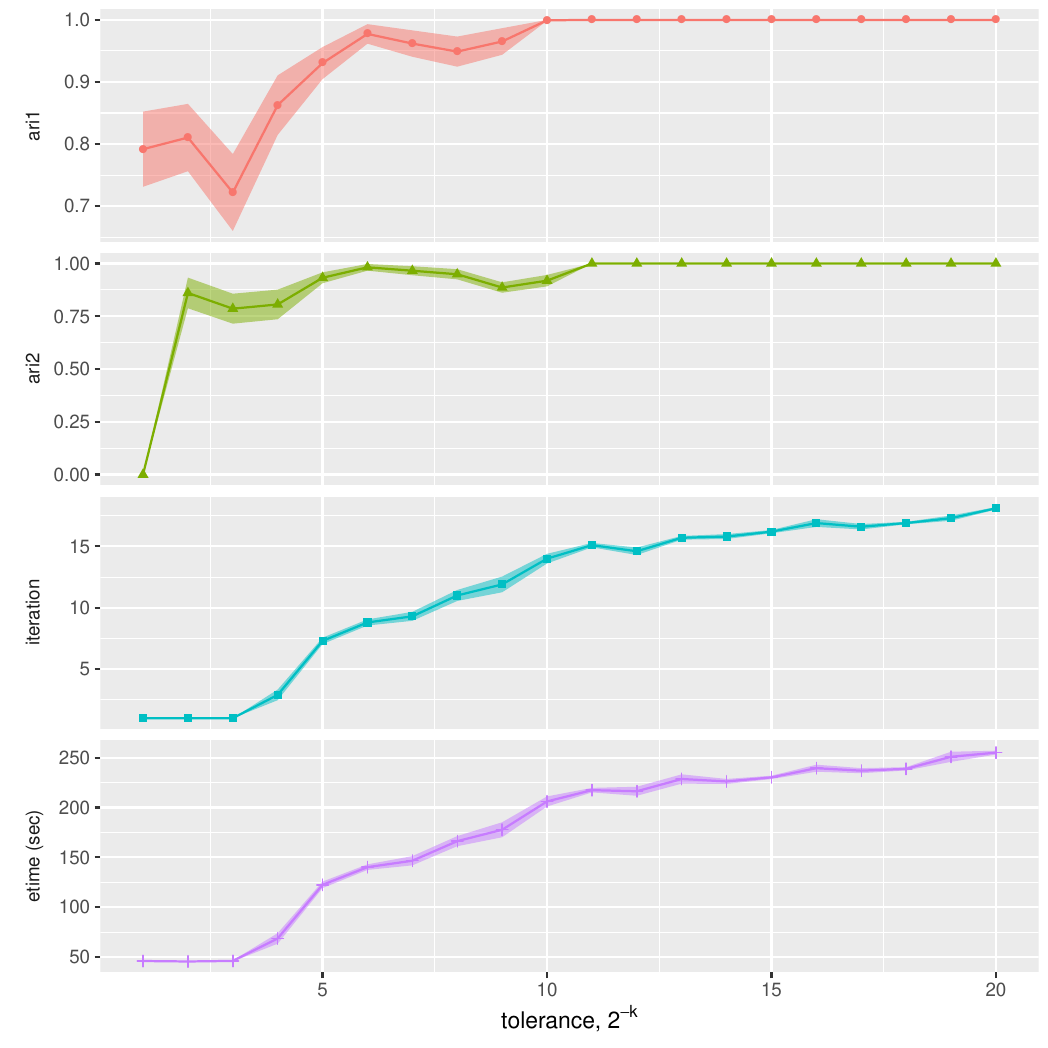}
	\end{center}
	\caption{ARI for comparison of clusterings of the Youtube network with default tolerance in IRLBA and with tolerance $2^{-k}$ (top plot); ARI comparing tolerance of $2^{-k}$ and $2^{-(k+1)}$ (second plot), and the number of iterations and time in seconds (third and fourth plots). 
Our heuristic for the optimal tolerance is $1/[\log(\log(n))\sqrt{n}])=0.00035\approx 2^{-11.5}$ in this case. The ARI is already very close to 0.95 by tolerance $2^{-7}$, which is significantly larger than the default tolerance ($10^{-6}$) in IRLBA.}
	\label{fig:youtube}
\end{figure}







\section{Acknowledgments}
The authors thank John Conroy, Donniell Fishkind, and Joshua Cape for helpful comments, as well as the anonymous referees whose suggestions improved the manuscript.  This work is partially supported by the D3M, XDATA, GRAPHS and SIMPLEX programs of the Defense Advanced Research Projects Agency (DARPA).
\bibliographystyle{plainnat}
\bibliography{biblio}

\appendix

\section{Proofs of Results}
\label{AS:Proofs}


\subsection{Proofs of Theorems~\ref{thm:uaminusup} and \ref{thm:xhatminusx}}
\label{s:thmuaminusupproof}

In this section, we prove our results about the statistical error incurred when using the (exact) eigendecomposition of the observed matrix $A$ to approximate the eigendecomposition of its mean matrix $P$. 

\begin{lemma}
\label{lem:aminusporder}
Suppose $A$ is an LPSM matrix, and suppose Assumption~\ref{ass:typical}. Let $c>0$, and let $\delta>0$ satisfy $n^{-c}<\delta<1/2.$ There is a constant $C$ such that for large enough $n$, $$\PP[\|A-P\|\geq C\sqrt{n\log(n)}]\leq 2\delta.$$ \end{lemma}


\begin{proof}
We argue as in \cite{oliveira2009concentration}. Define for all $1\leq i\leq j\leq n$ the order-$n$ symmetric matrices $$E_{ij}=\begin{cases}e_i e_i^{T}&\text{ if }i=j,\\e_i e_j^\top+e_j e_i^\top&\text{ otherwise},\end{cases}$$ where $e_i$ are the standard unit vectors. Then naturally $A=\sum_{i\leq j}A_{ij}E_{ij},$ and likewise for $P$, so $$A-P=\sum_{i\leq j}(A-P)_{ij}E_{ij}=:\sum_{i\leq j}\Delta_{ij}.$$ The matrices $\Delta_{ij}$ are independent with mean zero and clearly satisfy the bound $\|\Delta_{ij}\|_2\leq \beta\|E_{ij}\|_2=\beta,$ the latter equality following since the eigenvalues of $E_{ij}$ all belong to $\{-1,0,1\}$, and $E_{ij}\neq0$. Since the $\Delta_{ij}$ are symmetric, we see that $$\EE[\Delta_{ij}\Delta_{ij}^\top]=\EE[\Delta_{ij}^{2}]=\sigma_{ij}^{2}E_{ij}^{2}=\begin{cases}\sigma_{ii}^{2}E_{ii}&\text{if }i=j\\ \sigma_{ij}^{2}(E_{ii}+E_{jj})&\text{otherwise.}\end{cases}$$ 
This gives $$\left\|\sum_{i\leq j}\EE[\Delta_{ij}^{2}]\right\|_2=\max_{i}\sum_{j}\sigma_{ij}^{2}=\sigma^{2}(n),$$ the last equality following from the definition in Equation~\ref{eq:sigmandef}. Applying the matrix Bernstein inequality (see Theorem 1.6 \cite{tropp}) shows that for all $t\geq 0$, $$\PP[\|A-P\|_2\geq t]\leq 2n\exp\left(\frac{-t^{2}/2}{\sigma^{2}(n)+\beta t/3}\right).$$ Choosing $t=(1+\sqrt{3})\sqrt{\sigma^{2}(n)\log(n/\delta)}=C\sqrt{n\log(n)}$ in the above completes the proof.

\end{proof}




\begin{lemma}
\label{lem:closestorthogonal}
Let $A$ be an LPSM matrix, and suppose Assumption~\ref{ass:typical} holds. For $c>0$, $n^{-c}<\delta<1/2,$ there is a positive constant $C$ such that for large enough $n$, $$\PP[\|U_A U_A^\top-U_P U_P^\top\|_2\geq C\sqrt{\log(n)/n}]\leq 2\delta.$$
Let $W_1\Sigma W_2^\top$ be the singular value decomposition of $U_P^\top U_A$, and let $W^*=W_1 W_2^\top$. Then there is a constant $C$ such that for large enough $n$, $$\PP[\|U_P^\top U_A-W^*\|_F\geq C\log(n)/n]\leq 2\delta.$$ 

There is a constant $C$ such that for large enough $n$, $$\PP[\|U_P^\top (A-P)U_P\|_{F}\geq C\sqrt{\log(n)}]\leq \delta.$$

There are constants $C_1,C_2,C_3>0$ such that for large enough $n$, \begin{align*}\PP[\|W^* S_A-S_P W^*\|_F\geq C_1\log(n)]&\leq 3\delta,\\
\PP[\|W^* S_A^{1/2}-S_P^{1/2}W^*\|_F\geq C_2\log(n) n^{-1/2}]&\leq 3\delta,\\
\PP[\|S_P^{-1/2}W^*-W^* S_A^{-1/2}\|_F\geq C_3\log(n) n^{-3/2}]&\leq 3\delta.\end{align*}


\end{lemma}

The proofs of these lemmas follow immediately from the proofs of Prop. 16 and Lem. 17 in \citep{hsbm}, so we omit them here.

\begin{lemma}
\label{lem:leadingorder}
Let $A$ be an LPSM matrix, and suppose Assumption~\ref{ass:typical}. Let $c>0, n^{-c}<\delta<1/2$. Then there is a sequence $\gamma(n)\sim O(\log(n)/\sqrt{n})$ such that for large enough $n$, with probability at least $1-3\delta$, there exists a rotation matrix $W$ such that $$\|\hat{X}-XW\|_{F}=\|(A-P)U_P S_P^{-1/2}\|_F+\gamma(n).$$
\end{lemma}

This lemma follows as in Theorem~18 of \cite{hsbm}.

\begin{lemma}
\label{lem:dominanttermconc}
 Let $A$ be an LPSM matrix, and let $V=U_P S_P^{-1/2}.$ 
If Assumption~\ref{ass:typical} holds, and for $c>0$, $n^{-c}<\delta<1/2,$ then there is a constant $C>0$ such that for large enough $n$, $$\PP[|\|(A-P)V\|_F-C(P)|\geq C \log(n)/\sqrt{n}]\leq 4\delta,$$ where $C(P)^2=\EE[\|(A-P)V\|_F^2]\sim \Theta(1).$

\end{lemma}

\begin{proof}
Let $A'$ be an arbitrary symmetric matrix with all entries in the intervals $[\alpha_{ij},\alpha_{ij}+\beta],$ and for each $1\leq r\leq s\leq n$, define $Z_{rs}=\|(A^{(r,s)}-P)V\|_{F}^{2},$ where $A^{(r,s)}$ agrees with $A$ in every entry except the $(r,s)$ and $(s,r)$th ones, where it equals $A'_{rs}$.
Expanding the formula for $Z=\|(A-P)V\|_F^2$, and using symmetry of $A-P$, we see that $$Z=\|(A-P)V\|_{F}^{2}=\sum_{i,j,k,\ell}(A-P)_{ik}(A-P)_{i\ell}V_{kj}V_{\ell j}=\sum_{i,k,\ell}(A-P)_{ik}(A-P)_{i\ell}(VV^{T})_{\ell k}.$$ Since all terms which do not feature $A_{rs}$ remain the same when $A^{(r,s)}$ is introduced, we see that for $r\neq s$, 
\begin{align*}
&Z-Z_{rs}\\
=&2\sum_{\ell}\left[(A-P)_{rs}(A-P)_{r\ell}(VV^{T})_{\ell s}-(A'-P)_{rs}(A^{(r,s)}-P)_{r\ell}(VV^{T})_{\ell s}\right]\\&+2\sum_{\ell}\left[(A-P)_{sr}(A-P)_{s\ell}(VV^{T})_{\ell r}-(A'-P)_{sr}(A^{(r,s)}-P)_{s\ell}(VV^{T})_{\ell r}\right]\\
=&2(A-A')_{rs}\left[\phantom{\int}\!\!\!\![(A-P)VV^{T}]_{rs}+[(A-P)VV^{T}]_{sr}+(A'-P)_{rs}((VV^{T})_{ss}+(VV^{T})_{rr})\right].
\end{align*}
Now since $|(A-A')_{rs}|$ and $|(A'-P)_{rs}|\leq \beta,$ we have for $r\neq s$ $$(Z-Z_{rs})^{2}\leq 16\beta^{2}\left[\phantom{\int}\!\!\!\![(A-P)VV^{T}]_{rs}^{2}+[(A-P)VV^{T}]_{sr}^{2}+\beta^{2}[(VV^{T})_{ss}^{2}+(VV^{T})_{rr}^{2}]\right],$$ and for $r=s$ we obtain $$(Z-Z_{rr})^{2}\leq 8\beta^{2}([(A-P)VV^{T}]_{rr}^2+\beta^{2}(VV^{T})_{rr}^{2}).$$ Summing over $r$ and $s$, we see that \begin{align*}
\sum_{r\leq s}(Z-Z_{rs})^{2}\leq& 16\beta^{2}\sum_{r,s}[(A-P)VV^{T}]_{rs}^{2}+8\beta^{4}\sum_{r}(VV^{T})_{rr}^{2}+16\beta^{4}\sum_{r< s}((VV^{T})_{ss}^{2}+(VV^{T})_{rr}^{2})\\\leq& 16\beta^{2}\|(A-P)VV^{T}\|_{F}^{2}+8\beta^{4}\sum_{r,s}((VV^{T})_{rr}^{2}+(VV^{T})_{ss}^{2})\\\leq& 16\beta^{2}Z\|V\|_2^2+16\beta^{4}n\|\mathrm{diag}(VV^{T})\|_{F}^{2}\leq \frac{16\beta^2}{\lambda_d(P)}Z+\frac{16\beta^4 nd}{\lambda_d(P)^2},\end{align*} since $VV^{T}=U_P S_P^{-1} U_P^{T}.$

Applying Theorems 5 and 6 in \cite{boucheron_massart_lugosi_conc_aop_2003}, we get the following bound for $t>0$, $a=16\beta^2/\lambda_d(P)$, and $b=16\beta^4 n d/\lambda_d(P)^2:$ \begin{equation}\label{eq:domtermconcbound}\PP[|Z-\EE[Z]|>t]\leq 2\exp\left(\frac{-t^2}{4a\EE[Z]+4b+2at}\right).\end{equation} Note that there exist constants $c_a,c_b>0$ such that $a\leq c_a/n$ and $b\leq c_b/n$.


We expand $\EE\|(A-P)V\|_F^2=\EE(\mathrm{tr}(V^{T}(A-P)^{2}V))=\mathrm{tr}(V^{T}\EE(A-P)^{2}V),$ and observe that \begin{align*}
\EE(A-P)^{2}_{ij}&=\EE\left[\sum_{k}(A-P)_{ik}(A-P)_{kj}\right]\\
&=\begin{cases}\sum_{k}\sigma_{ik}^{2}&\text{if }i=j\\0&\text{otherwise,}\end{cases}
\end{align*}
since $i\neq j$ implies that $(A-P)_{ik},(A-P)_{kj}$ are independent, and both have mean zero. In the case $i=j,$ this is simply the definition of the entrywise variances. Then from Assumption~\ref{ass:typical}, we see that in the positive semidefinite ordering, $c_1' n I \succcurlyeq \EE(A-P)^{2}\succcurlyeq c_2' n I,$ which means that 
$$c_2' n \|V\|_{F}^{2}\leq C^{2}(P)\leq c_1' n\|V\|_F^2.$$ But since $V=U_P S_P^{-1/2},$ and $$\|V\|_{F}^{2}=\|S_{P}^{-1/2}\|_{F}^{2}=\sum_{i=1}^{d}\lambda_{i}(P)^{-1}=\Theta(n^{-1})$$ by virtue of Assumption~\ref{ass:typical}, which gives $C^{2}(P)=\Theta(1)$.

Finally, returning to Equation~\eqref{eq:domtermconcbound} and choosing $t=C\log(n)/\sqrt{n}$, we obtain $\PP[|Z-C^2(P)|>t]\leq 2\delta$. Establishing the required bound is now a matter of straightforward algebra.
\end{proof}

We now prove Theorem~\ref{thm:xhatminusx}.
\begin{proof}
From Lemma~\ref{lem:leadingorder}, with probability at least $1-3\delta$, $$\|\hat{X}-XW\|_{F}=\|(A-P)V\|_F+O(\log(n)/\sqrt{n}),$$ and by Lemma~\ref{lem:dominanttermconc}, with probability at least $1-4\delta$, $$\|(A-P)V\|_{F}=C(P)+O(\log(n)/\sqrt{n}).$$ Therefore, with probability at least $1-7\delta,$ $$\|\hat{X}-XW\|_F=C(P)+O(\log(n)/\sqrt{n}).$$
\end{proof}

We now prove Theorem~\ref{thm:uaminusup}.

\begin{proof}
We begin by computing \begin{align*} U_A&= U_A S_A^{1/2} S_A^{-1/2}=\hat{X}S_A^{-1/2}=U_P S_P^{1/2}W^* S_A^{-1/2}+(\hat{X}-XW)S_A^{-1/2}\\
&=U_P W^*+U_P S_P^{1/2}(W^* S_A^{-1/2}-S_P^{-1/2}W^*)+(\hat{X}-XW)S_A^{-1/2}.
\end{align*}
Then \begin{align*}|\|U_A-U_P W^*\|_F&-\|(\hat{X}-XW)S_A^{-1/2}\|_F|\\\leq&	\|S_P^{1/2}(W^* S_A^{-1/2}-S_P^{-1/2}W^*)\|_F\\\leq& \sqrt{\|P\|_2}\|S_P^{-1/2}W^*-W^* S_A^{-1/2}\|_F.\end{align*} Expanding $\|(\hat{X}-XW)S_A^{-1/2}\|_F^{2},$ we see that $$\|(\hat{X}-XW)S_A^{-1/2}\|_F^{2}=\sum_{i,j}\frac{(\hat{X}-XW)_{i,j}^{2}}{\lambda_j(A)},$$ so \begin{equation}\label{eq:xhatminusxwsanegonehalf}\frac{\|\hat{X}-XW\|_F^2}{\lambda_1(A)}\leq\|(\hat{X}-XW)S_A^{-1/2}\|_F^{2}\leq \frac{\|\hat{X}-XW\|_F^2}{\lambda_d(A)}.\end{equation} Taking square roots and applying the bounds $\sqrt{\lambda_1(A)}\leq \sqrt{\|P\|_2}+\|A-P\|_2/(2\sqrt{\|P\|_2}),$ $\sqrt{\lambda_d(A)}\geq \sqrt{\lambda_d(P)}-\|A-P\|_2/\sqrt{\lambda_d(P)}$ yields the stated inequalities.

Under Assumption~\ref{ass:typical}, we see that $\|P\|_2,\lambda_d(P)=\Theta(n);$ from Lemma~\ref{lem:aminusporder}, $\|A-P\|_2=O(\sqrt{n\log(n)});$ from Lemma~\ref{lem:closestorthogonal}, $\|S_P^{-1/2}W^*-W^* S_A^{-1/2}\|_F=O(\log(n) n^{-3/2});$ and from Theorem~\ref{thm:xhatminusx}, $\|\hat{X}-XW\|_F=\Theta(1).$ Then the lower and upper bounds are respectively $\Theta(1/\sqrt{n})-O(\log(n)/n)$ and $\Theta(1/\sqrt{n})+O(\log(n)/n),$ both of which are $\Theta(1/\sqrt{n})$, which completes the proof.
\end{proof}

\subsection{Proof of Thm.~\ref{thm:numapprox}}
\label{s:propnumapproxproof}



We make use of the following known results on matrix decompositions.

\begin{theorem}(\citet[Theorem~4.2.15]{stewart}) \label{Stewart} Let the Hermitian matrix $A$ have the spectral representation $A=JLJ^{\top} + YMY^{\top}$, 
	where the matrix $[J \, Y]$ is unitary.  Let the orthonormal matrix $Z$ be of the same dimensions as $J$, 
and let $N$ be a Hermitian matrix, and let $\lambda(M)$ and $\lambda(N)$ denote the spectra of $M$ and $N$. Let $\rho$ denote the minimum distance between any element in $\lambda(M)$ and any element in $\lambda(N)$, and suppose $\rho>0$.  Then
	\begin{equation}
	\|\sin \Psi(\mathcal{R}(J), \mathcal{R}(Z))\|_F \leq \frac{\|AZ-ZN\|_F}{\rho}
	\end{equation}
	where $\mathcal{R}(J)$ and $\mathcal{R}(Z)$ denote the eigenspaces of the  matrices $J$ and $Z$, respectively,
and $\Psi(\mathcal{R}(J), \mathcal{R}(Z))$ denotes the diagonal matrix of canonical angles between them.
\end{theorem}

\begin{theorem}(\citet{kahan}) \label{Kahan} Let $A \in \mathbb{C}^{n \times n}$ and $B \in \mathbb{C}^{d \times d}$ each be Hermitian.  Let $H \in \mathbb{C}^{n \times d}$ have orthonormal columns. 
	Then to the eigenvalues $\mu_1, \mu_2, \dots, \mu_l$ of $B$, there correspond $l$ eigenvalues $\lambda_1, \lambda_2, \cdots, \lambda_l$ of $A$ such that
	$$|\lambda_i -\mu_i| \leq \|AZ-ZN\|_2$$
\end{theorem}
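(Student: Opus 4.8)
The plan is to reduce to the case where $B$ is diagonal, establish one-sided eigenvalue bounds via the Courant--Fischer min-max principle, and finally upgrade these to an injective matching of eigenvalues. First I would diagonalize the Hermitian matrix $B$ as $B = Q\Lambda Q^{*}$ with $Q$ unitary and $\Lambda = \mathrm{diag}(\mu_1,\dots,\mu_l)$, and replace $H$ by $\widehat H = HQ$, which still has orthonormal columns. Since $A\widehat H - \widehat H\Lambda = (AH - HB)Q = FQ$ and $Q$ is unitary, the residual norm is unchanged: $\|A\widehat H - \widehat H\Lambda\|_2 = \|F\|_2$. Thus it suffices to treat $B = \Lambda$, with the $\mu_i$ sorted in decreasing order and $h_1,\dots,h_l$ the orthonormal columns of $H$ satisfying $Ah_j - \mu_j h_j = f_j$, where $f_j$ is the $j$th column of $F$ and $\|f_j\|_2 \le \|F\|_2$. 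I note in passing that the naive route---comparing $A$, written in the basis $[H \mid H_\perp]$, with the block-diagonal matrix $\mathrm{diag}(\Lambda, H_\perp^{*}AH_\perp)$ via Weyl's inequality---only yields the bound $2\|F\|_2$, so a sharper argument is needed to reach the constant $1$.

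The key step is a pair of one-sided min-max bounds. Fix $k \in \{1,\dots,l\}$ and let $\mathcal S_k = \mathrm{span}(h_1,\dots,h_k)$, a $k$-dimensional subspace. For any unit vector $v = \sum_{j=1}^{k} c_j h_j \in \mathcal S_k$, orthonormality of the $h_j$ gives $v^{*}Av = \sum_{j=1}^{k} |c_j|^2 \mu_j + v^{*}\big(\sum_{j=1}^{k} c_j f_j\big)$; since $\sum_j |c_j|^2 = 1$ and $\mu_j \ge \mu_k$ for $j \le k$, while $|v^{*}(\sum_j c_j f_j)| \le \|F\|_2$, we obtain $v^{*}Av \ge \mu_k - \|F\|_2$. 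Writing $\lambda_k^{\downarrow}(A)$ for the $k$th largest eigenvalue of $A$, the Courant--Fischer characterization with $\mathcal S_k$ as trial subspace yields $\lambda_k^{\downarrow}(A) \ge \mu_k - \|F\|_2$. Applying the identical argument to $-A$ and $-B$, whose residual has the same norm, produces the companion family, which after reindexing reads $\lambda_{n-l+k}^{\downarrow}(A) \le \mu_k + \|F\|_2$ for each $k$.

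The main obstacle is the final step: converting these two one-sided families into an injective correspondence $\mu_k \mapsto \lambda_{\sigma(k)}^{\downarrow}(A)$ with $|\lambda_{\sigma(k)}^{\downarrow}(A) - \mu_k| \le \|F\|_2$. This is genuinely necessary, since the example $A = \mathrm{diag}(\text{large},\mu_1,\mu_2)$ with $F = 0$ shows that the obvious pairing $\mu_k \leftrightarrow \lambda_k^{\downarrow}(A)$ can fail---the matched eigenvalues of $A$ need not be its $l$ largest. To handle this I would recast the bounds as counting statements: $\lambda_k^{\downarrow}(A) \ge \mu_k - \|F\|_2$ implies that for every threshold $t$ the number of eigenvalues of $A$ exceeding $t$ is at least the number of $\mu_k$ exceeding $t + \|F\|_2$, and symmetrically for the lower bound. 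These two monotone counting inequalities are precisely the Hall-type condition guaranteeing a system of distinct representatives, so a marriage/greedy argument produces the injection $\sigma$ and completes the proof. The delicate point throughout is bookkeeping the sharp constant: the min-max step must replace a global Weyl bound, and the matching must respect the sorted order so that no factor of $2$ creeps in.
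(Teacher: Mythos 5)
The paper does not actually prove this theorem---it quotes it from \citet{kahan} (a full proof appears, e.g., in Parlett's \emph{The Symmetric Eigenvalue Problem})---so your proposal must stand on its own, and it has a genuine gap in the final step. The first two steps are fine: the reduction to diagonal $B$ is correct, and the Courant--Fischer bounds $\lambda_k^{\downarrow}(A) \ge \mu_k - \|F\|_2$ and $\lambda_{n-l+k}^{\downarrow}(A) \le \mu_k + \|F\|_2$ are correctly derived (and you are right that naive Weyl gives only $2\|F\|_2$). But your claim that these two one-sided counting families ``are precisely the Hall-type condition'' is false: they do not imply it. Concretely, take $l=1$, $\mu_1 = 0$, $\|F\|_2 = 1$, $n=2$, and a putative spectrum $\{2,-2\}$ for $A$. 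Both of your counting inequalities hold (the number of $\lambda$'s exceeding $t$ is at least the number of $\mu$'s exceeding $t+1$ for every $t$, and symmetrically from below), yet no eigenvalue lies within $1$ of $\mu_1$, so Hall's condition for the interval $[-1,1]$ fails. No actual pair $(A,H)$ realizes these data---for $l=1$ the residual bound $\min_j |\lambda_j - \mu_1| \le \|F\|_2$ is classical---but that is exactly the point: your two families have discarded the information that rules this out. Quantitatively, for an interval $[a,b]$ your inequalities only certify about $\#\{k: \mu_k \in [a,b]\} - (n-l)$ eigenvalues of $A$ in the enlarged interval, and the slack $n-l$ is fatal whenever $n > l$ (when $n=l$, $H$ is unitary and plain Weyl already gives the theorem).

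The missing ingredient is a cluster-counting lemma proved from the operator itself rather than from the sorted one-sided bounds: \emph{for every interval $[a,b]$ containing $m$ of the $\mu_k$, the matrix $A$ has at least $m$ eigenvalues in $[a-\|F\|_2,\, b+\|F\|_2]$.} This follows from a spectral-projector dimension count: if $A$ had fewer than $m$ eigenvalues there, the $m$-dimensional span of the corresponding columns $H_S$ of $H$ (after your diagonalization of $B$) would contain a unit vector $v = H_S c$ orthogonal to the spectral subspace of $A$ for that enlarged interval, forcing $\|(A-c_0)v\| > \tfrac{b-a}{2} + \|F\|_2$ with $c_0 = \tfrac{a+b}{2}$; but the residual relation $AH_S = H_S\Lambda_S + F_S$ gives $\|(A-c_0)v\| \le \|(\Lambda_S - c_0 I)c\| + \|F_S c\| \le \tfrac{b-a}{2} + \|F\|_2$, a contradiction. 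Since any union of the intervals $[\mu_k - \|F\|_2, \mu_k + \|F\|_2]$ splits into connected components each contained in such an enlarged interval, this lemma \emph{does} verify Hall's condition, and your marriage argument then correctly produces the injection. So your scaffolding (diagonalize, then match via Hall) is the right shape and is essentially how Kahan's theorem is proved; what must change is the input to Hall: replace the min-max inequalities by the projector-based interval count.
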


We now prove Thm.~\ref{thm:numapprox}.
\begin{proof}
Suppose $c_1 n\geq \lambda_1(A),\ldots,\lambda_d(A)\geq c_2 n$ for some $c_1>c_2>0,$ and $c_3\sqrt{n}\geq |\lambda_{d+1}(A)|,\ldots,|\lambda_{n}(A)|$. Let $\hat{U}_k,\hat{S}_k$ be approximate matrices of eigenvectors and eigenvalues for $A$, with the diagonal entries of $S_A$ and $\hat{S}_k$ nonincreasingly ordered. Suppose $\epsilon \leq 1/c_4\sqrt{n}$  where $c_4>c_3+(c_1/c_3)$. We first show that if Equation~\ref{eq: irlba_terminal_condition_alt}
holds with such $\epsilon$, then $\hat{S}_k$ correctly estimates the top $d$ eigenvalues of $A$.  To see why, note that Theorem~\ref{Kahan} guarantees that there exist eigenvalues $\lambda_{i_1}(A),\ldots,\lambda_{i_d}(A)$ of $A$ such that 
\begin{equation}\label{eq:Kahan_applied}
|\lambda_j(\hat{S}_k)-\lambda_{i_j}(A)|\leq \|A\hat{U}_k-\hat{U}_k\hat{S}_k\|_2\leq \|A\|_2\epsilon.
\end{equation}
where the latter inequality follows by Eq.\eqref{eq: irlba_terminal_condition_alt}.
But if $i_j>d$ for some $j$, then $|\lambda_j(\hat{S}_k)|\leq |\lambda_{i_j}(A)|+\epsilon\|A\|_2\leq c_3\sqrt{n}+(c_1/c_3)\sqrt{n}<c_4\sqrt{n}$ by the bound on $c_4$, so $|\lambda_j(\hat{S}_k)|^{-1} > 1/(c_4\sqrt{n})$, and $\|\hat{S}_k^{-1}\|_2>\epsilon$, which contradicts Eq.\eqref{eq: irlba_terminal_condition_alt}.
Now, by Eq.\eqref{eq:Kahan_applied}, since $1\leq i_j\leq d$, we observe that
$$\|\hat{S}_k-S_A\|_2 =\max_j |\lambda_j(\hat{S}_k)-\lambda_j(A)|\leq \max_j |\lambda_j(\hat{S}_k)-\lambda_{i_j}(A)|\leq \|A\hat{U}_k-\hat{U}_k\hat{S}_k\|_2\leq \|A\|_2\epsilon,$$ which proves the first statement. We now show that there is a constant $C>0$ and orthogonal matrix $W$ such that
$$\|\hat{U}_k-U_A W\|_F<C\epsilon.$$


Since we have shown that for $j\leq d,$ $\lambda_j(\hat{S}_k)\geq \lambda_j(A)-\epsilon\|A\|_2\geq c_2 n-(c_1/c_3)\sqrt{n}$, we see that for any $r>d,$ $|\lambda_j(\hat{S}_k)-\lambda_r(A)|\geq c_2 n-(c_1/c_3)\sqrt{n}-c_3\sqrt{n}> c_2 n-c_4\sqrt{n}$, which means that $\rho=\min_{j\leq d<r}|\lambda_j(\hat{S}_k)-\lambda_r(A)|\geq c_2 n-c_4\sqrt{n}$. 

By Theorem~\ref{Stewart}, $$\|\sin\Psi(\mathcal{R}(U_A),\mathcal{R}(\hat{S}_k))\|_F\leq\frac{\|A\hat{U}_k-\hat{U}_k\hat{S}_k\|_F}{\rho}\leq \frac{\sqrt{d}\|A\hat{U}_k-\hat{U}_k\hat{S}_k\|_2}{\|A\|_2}\frac{\|A\|_2}{\rho}\leq \sqrt{d}\epsilon \frac{c_1 n}{c_2 n-c_4\sqrt{n}},$$ which means this bound also holds for $\|U_A U_A^\top-\hat{U}_k \hat{U}_k^\top\|_F.$ 

By  \citet[Proposition~2.1]{rohe2011spectral}, there is an orthonormal matrix $W$ such that
\begin{equation}
\|U_A W-\hU_k\|_F \leq \sqrt{2d}\epsilon \frac{c_1 n}{c_2 n-c_4\sqrt{n}},
\end{equation} and since $c_2 n-c_4\sqrt{n}> (c_2/2) n$ once $n$ is large enough, this upper bound is at most $2\sqrt{2d}(c_1/c_2)\epsilon=:C\epsilon$.
\end{proof}

\subsection{Proofs of Theorems~\ref{thm:numtol} and \ref{thm:numtol_xs}}
\label{s:thmnumtolproof}


We decompose the total error as a sum of statistical and numerical error; our choice of $\epsilon$ guarantees that the statistical error is the dominant term.

\begin{proof}
Throughout this proof, whenever we bound terms involving $A$, we work on the set of high probability where the various supporting bounds hold. From the proof of Theorem~\ref{thm:uaminusup}, $\|U_A-U_P W^*\|_F\in\|(\hat{X}-XW)S_A^{-1/2}\|_F\pm\beta(n)$, and using Equation~\ref{eq:xhatminusxwsanegonehalf}, $$\frac{\|\hat{X}-XW\|_F^2}{\lambda_1(A)}-\beta(n)\leq \|U_A-U_P W^*\|_F^2\leq \frac{\|\hat{X}-XW\|_F^2}{\lambda_d(A)}+\beta(n),$$ so applying Theorem~\ref{thm:xhatminusx} gives us
$$\frac{C(P)}{\sqrt{\|A\|_2}}-\beta(n)\leq\|U_A-U_P W^*\|_F\leq \frac{C(P)}{\sqrt{\lambda_d(A)}}+\beta(n),$$ where $\beta(n)\sim O(\log(n)/n),$ and the lower bound still holds if we replace $W^*$ for any other $W$. Then for some choice of $W$ and $W_1$, we see that $$\hat{U}_k-U_P W W_1=\hat{U}_k-U_A W_1+U_A W_1-U_P W W_1.$$ So if we let $W_1$ be the closest orthogonal matrix to $U_A^\top \hat{U}_k,$ and $W^*$ be the closest orthogonal matrix to $U_P^\top U_A$, then
\begin{align*}
\min_W\|\hat{U}_k-U_P W\|_F 
&\leq \|\hat{U}_k-U_A W_1\|_F+\|U_A W_1-U_P W^* W_1\|_F
\leq C\epsilon+\frac{C(P)}{\sqrt{\lambda_d(A)}}+\beta(n).
\end{align*} 
Arguing similarly, we have for any $W$,
$$ \|\hat{U}_k-U_P W W_1\|_F\geq \|U_A W_1-U_P W W_1\|_F-\|\hat{U}_k-U_A W_1\|_F\geq \frac{C(P)}{\sqrt{\|A\|_2}}-\beta(n)-C\epsilon.$$ Note that the set $\{W W_1: W\text{ is orthogonal}\}=\{W: W\text{ is orthogonal}\}$, so this lower bound holds for all $W$. As soon as $\epsilon\sim o(1/\sqrt{n}),$ since $\beta(n)\sim O(\log(n)/n)$ is $o(1/\sqrt{n}),$ the term $C(P)/\sqrt{\|A\|_2}$ becomes the dominant term, so further reduction in $\epsilon$ makes no difference to the order of the lower bound.

Now we address the scaled case. Let $W_1$ be the closest orthogonal matrix to $U_A^\top \hat{U}_k$ as before, and let $W^*$ be the closest orthogonal matrix to $U_P^\top U_A$ as usual. We see that $$\hat{U}_k\hat{S}_k^{1/2}-U_P S_P W^* W_1=\hat{U}_k \hat{S}_k^{1/2}-U_A S_A^{1/2} W_1+U_A S_A^{1/2} W_1-U_P S_P^{1/2}W^* W_1,$$ where the last two terms are just $(\hat{X}-XW)W_1$ for some orthogonal matrix $W$. From Thm.~\ref{thm:numapprox}, we know that $\|U_A^\top \hat{U}_k-W_1\|_F\leq \|\hat{U}_k-U_A W_1\|_F\leq C\epsilon.$ Then since 
\begin{align*}
S_A W_1-W_1 \hat{S}_k&= S_A U_A^\top \hat{U}_k+S_A(W_1-U_A^\top \hat{U}_k)-W_1 \hat{S}_k
=U_A^\top A\hat{U}_k+S_A(W_1-U_A^\top \hat{U}_k)-W_1 \hat{S}_k\\
&=(U_A^\top \hat{U}_k-W_1)\hat{S}_k+U_A^\top(A\hat{U}_k-\hat{U}_k\hat{S}_k)+S_A(W_1-U_A^\top \hat{U}_k),
\end{align*} we see that
$$\|S_A W_1-W_1\hat{S}_k\|_F\leq C\epsilon( \|\hat{S}_k\|_2+\|S_A\|_2)+\|A\|_2\epsilon,$$ which is further bounded by $\|A\|_2\epsilon(2C+C\epsilon+1).$ Arguing as in Lemma~\ref{lem:closestorthogonal}, we have $$\|S_A^{1/2} W_1-W_1\hat{S}_k^{1/2}\|_F\leq \frac{\|A\|_2\epsilon(2C+C\epsilon+1)\sqrt{\lambda_d(A)}}{2\lambda_d(A)-\|A\|_2\epsilon}\leq C'\sqrt{\|A\|_2}\epsilon.$$

We compute
\begin{align*}
&\hat{U}_k\hat{S}_k^{1/2}-U_A S_A^{1/2}W_1
=\hat{U}_k\hat{S}_k\hat{S}_k^{-1/2}-U_A S_A^{1/2}W_1\\
=&A\hat{U}_k \hat{S}_k^{-1/2}+(\hat{U}_k \hat{S}_k-A\hat{U}_k)\hat{S}_k^{-1/2}-U_A S_A^{1/2}W_1\\
=&AU_A W_1\hat{S}_k^{-1/2}+A(\hat{U}_k-U_A W_1)\hat{S}_k^{-1/2}+(\hat{U}_k \hat{S}_k-A\hat{U}_k)\hat{S}_k^{-1/2}-U_A S_A^{1/2}W_1\\
=&U_A S_A^{1/2}(S_A^{1/2}W_1-W_1 \hat{S}_k^{1/2})\hat{S}_k^{-1/2}+A(\hat{U}_k-U_A W_1)\hat{S}_k^{-1/2}+(\hat{U}_k \hat{S}_k-A\hat{U}_k)\hat{S}_k^{-1/2},
\end{align*} which leads to \begin{align*}
\|\hat{U}_k\hat{S}_k^{1/2}-U_A S_A^{1/2}W_1\|_F&\leq \|A\|_2^{1/2}\|\hat{S}_k^{-1/2}\|_2\|S_A^{1/2}W_1-W_1 \hat{S}_k^{1/2}\|_F\\
&+\|A\|_2\|\hat{S}_k^{-1/2}\|_2\|\hat{U}_k-U_A W_1\|_F+\|\hat{S}_k^{-1/2}\|_2\|A\hat{U}_k-\hat{U}_k\hat{S}_k\|_F.
\end{align*}
From  Thm.~\ref{thm:numapprox}, $\|\hat{S}_k^{-1/2}\|_2\sim O(1/\sqrt{\|A\|_2}),$ so from the bounds proved above and the stopping criterion, we have for some constant $C$ $$\|\hat{U}_k\hat{S}_k^{1/2}-U_A S_A^{1/2}W_1\|_F \leq C\sqrt{\|A\|_2}\epsilon.$$

Returning to the bounds on the total error, we have $$\left|\frac{\|\hat{U}_k\hat{S}_k^{1/2}-U_P S_P W^* W_1\|_F}{\sqrt{\|P\|_2}}- \frac{\|\hat{X}-XW\|_F}{\sqrt{\|P\|_2}}\right|\leq \frac{C\sqrt{\|A\|_2}\epsilon}{\sqrt{\|P\|_2}}.$$ Since from Theorem~\ref{thm:xhatminusx}, $\|\hat{X}-XW\|_F=C(P)+\gamma(n),$ where $\gamma(n)\sim O(\log(n)/\sqrt{n}),$ we see that $$\left|\frac{\|\hat{U}_k\hat{S}_k^{1/2}-U_P S_P W^* W_1\|_F}{\sqrt{\|P\|_2}}-\frac{C(P)}{\sqrt{\|P\|_2}}\right|\leq \beta(n)+C'\epsilon.$$ Now since $|\|A\|_2-\|P\|_2|\leq \|A-P\|_2,$ once $n$ is large enough we have \begin{align*}\sqrt{\|A\|_2}&\leq \sqrt{\|P\|_2}+\frac{\|A-P\|_2}{2\sqrt{\|P\|_2}}\leq 2\sqrt{\|P\|_2},\text{ and}\\
\sqrt{\|A\|_2}&\geq \sqrt{\|P\|_2}-\frac{\|A-P\|_2}{\sqrt{\|P\|_2}}\geq \frac{1}{2}\sqrt{\|P\|_2},
\end{align*}
which yields the final inequalities. Arguing as we did for $\|\hat{U}_k-U_PW\|_F$, and from Theorem~\ref{thm:xhatminusx}, the lower bound holds for any $W$.

\end{proof}

\end{document}